\documentclass[authoryear]{elsarticle}
\usepackage{mathtools,upref,siunitx,upquote,fancyvrb,bbm,xspace,color,amsmath,amssymb, bm,amsthm}
\allowdisplaybreaks
\usepackage[hyphens]{url}
\usepackage[utf8]{inputenc}
\usepackage{esdiff}
\usepackage{graphicx}
\usepackage{xcolor}
\usepackage{cleveref}

\DeclareSymbolFont{GreekLetters}{OML}{cmr}{m}{it} %Provide missing letters
\DeclareSymbolFont{UpSfGreekLetters}{U}{cmss}{m}{n} %Provide missing letters
\DeclareMathSymbol{\varrho}{\mathalpha}{GreekLetters}{"25}
\DeclareMathSymbol{\UpSfLambda}{\mathalpha}{UpSfGreekLetters}{"03}
\DeclareMathSymbol{\UpSfSigma}{\mathalpha}{UpSfGreekLetters}{"06}
\providecommand{\mathbold}{\boldsymbol}
\newcommand{\bvec}[1]{\mathbold{#1}}

\newcommand{\dif}{{\rm d}}

\newlength{\overwdth}

\makeatletter
\newcommand*\bigcdot{\mathpalette\bigcdot@{.7}}
\newcommand*\bigcdot@[2]{\mathbin{\vcenter{\hbox{\scalebox{#2}{$\m@th#1\bullet$}}}}}
\makeatother

\def\abs#1{\ensuremath{\left \lvert #1 \right \rvert}}

\newcommand{\norm}[2][{}]{\ensuremath{\left \lVert #2 \right \rVert}_{#1}}

\DeclareMathOperator{\opt}{opt}

\DeclareMathOperator{\Order}{{\mathcal O}}

\newcommand{\vzero}{\bvec{0}}
\newcommand{\vone}{\bvec{1}}

\newcommand{\vh}{\bvec{h}}

\newcommand{\vk}{\bvec{k}}

\newcommand{\vw}{\bvec{w}}

\newcommand{\vx}{\bvec{x}}
\newcommand{\vX}{\bvec{X}}

\newcommand{\vz}{\bvec{z}}

\newcommand{\vgamma}{\bvec{\gamma}}

\newcommand{\vDelta}{\bvec{\Delta}}

\newcommand{\tf}{\widetilde{f}}

\newcommand{\mK}{\mathsf{K}}

\newcommand{\bbone}{\mathbbm{1}}
\newcommand{\naturals}{\mathbb{N}}
\newcommand{\reals}{\mathbb{R}}
\newcommand{\integers}{\mathbb{Z}}
\newcommand{\natzero}{\mathbb{N}_{0}}

\newcommand{\cf}{\mathcal{F}}

\newcommand{\cl}{\mathcal{L}}

\newcommand{\me}{\ensuremath{\mathrm{e}}} % for math number 'e', 2.718 281 8..., tha base of natural logarithms
\usepackage{algpseudocode}
\usepackage{algorithm, algorithmicx}
\algnewcommand\algorithmicparam{\textbf{Parameters:}}
\algnewcommand\PARAM{\item[\algorithmicparam]}
\algnewcommand\algorithmicinput{\textbf{Input:}}
\algnewcommand\INPUT{\item[\algorithmicinput]}
\algnewcommand\RETURN{\State \textbf{Return }}

\newcommand{\tr}{\widetilde{r}}
\newcommand{\appxintn}{\appxint_n}
\DeclareMathOperator{\appxint}{\hat{I}}
\DeclareMathOperator{\trun}{trunc}
\newcommand{\onetos}{1\!:\!s}

\DeclareMathOperator{\Bern}{bern}

\newtheorem{theorem}{Theorem}

\begin{document}
\title{The Quality of Lattice Sequences for \\ Arbitrary Sample Size}
\author[inst1]{Larysa Matiukha\corref{cor1}}
\ead{larysa.matiukha@gmail.com}
\cortext[cor1]{Corresponding author}
\author[inst1]{Yuhan Ding}
\author[inst1]{Fred J. Hickernell}
\affiliation[inst1]{
organization = {Illinois Institute of Technology, Chicago, IL 60616, USA},
  city={Chicago},
  state={IL},
  country={USA}
}
\begin{abstract} Extensible lattices are a common choice of 
nodes for approximating multidimensional integrals by a sample mean, typically using sample sizes of the form $n = b^m$ for a prime base $b$ and non-negative integer $m$. However, a computational time budget or early termination due to hardware failure may prevent us from choosing the preferred number of samples. In this paper, we establish an upper bound on the figure of merit $P_\alpha$ for extensible lattice sequences \emph{with arbitrary $n$}, derived as a worst-case error in a Banach space setting. 
We show that although the upper bound on $P_\alpha$ can decay no faster than $\Order(n^{-1})$ for general $n$, restricting $n$ to the form $n =\lambda b^p$, where $\lambda$ is a fixed integer, we can recover the optimal decay of nearly $\Order(n^{-\alpha})$ as $p \to \infty$. We also investigate a related figure of merit, $P_{\alpha,2}$, arising as a worst-case error for Hilbert spaces of integrands. We obtain a theoretical upper bound on $P_{\alpha,2}$ in terms of $P_{\alpha}$ for equal sample weights and show numerically that optimal sample weights can improve the cubature error bound even further.

\end{abstract}

\maketitle

\section{Introduction}
Multidimensional integrals arise in various applications in quantitative finance \citep{CafMorOwe97}, statistics \citep{Gen92,GenBre99}, and physics \citep{Kei96,PapTra97}, where they are often expressed as expectations of random variables.
Lattices are a popular choice of nodes $\{\vz_i\}_{i=0}^{\infty} \in [0,1)^s$ for approximating multidimensional integrals by a sample mean,
\[
\int_{[0,1)^s} f(\vx) \, \dif \vx \approx \frac{1}{n} \sum_{i=0}^{n-1} f(\vz_i) =:\appxintn(f),
\]
\citep{DicEtal22a,Nie92,SloJoe94}.

\cite{HicNie03a} established the existence of extensible lattices with good generating vectors for the preferred values $n = b, b^2, \ldots$. 
In practice, it might not always be possible to restrict the sample size to $n = b^m$ due to early termination or computational time budget. Motivated by this, we aim to extend these results to all positive $n$, recognizing that the upper bounds on the figures of merit will be somewhat worse for general $n$ than for powers of the base.

The remainder of the paper is organized as follows. Section 2 reviews the background on integration lattices and existing results. Section 3 introduces the figure of merit $P_\alpha$ for arbitrary $n$, an extension of the traditional $P_\alpha$ defined for full lattices.  This $P_\alpha$ is the worst-case error for integrands in a Korobov (Banach) space indexed by the smoothness parameter $\alpha$.  Section 4 establishes the upper bound on $P_\alpha$ when $n$ is not a power of the base. Section 5 considers a related quantity, $P_{\alpha,2}$, which is the worst case cubature error for the analogous Hilbert space.  The figure of merit $P_{\alpha,2}$ is more amenable to calculation than $P_\alpha$, and there is a discussion of optimal sample weights and worst case cubature error for arbitrary $n$.

\section{Background} % main things + previous results 
Historically, lattice points were constructed as sets with fixed cardinality, $n$, and took the form
\begin{equation*} %\label{eq:lat}
    \{\vz_i = i \vh/n \bmod{\vone} : i=0,1, \ldots, n-1 \} \in [0,1)^s,
\end{equation*}
where $\vh \in \{1, \ldots, n-1\}^s$ is the \emph{generating vector}. Note that this set is closed under addition modulo $\vone$. A (random) shift, $\vDelta \in [0,1)^s$, is often added:
\begin{equation*}% \label{eq:shlat}
    \{\vz_i = i \vh/n + \vDelta \pmod{\vone} : i=0,1, \ldots, n-1 \} \in [0,1)^s.
\end{equation*}

Since this construction depends on $n$, a new point set must be generated each time the sample size changes. To address this, lattices were later defined extensibly, i.e., not for a fixed $n$, allowing the lattice to be extended without discarding previously generated points. Extensible lattice sequences, proposed in \cite{Mai81a} and \cite{HicEtal00}, take the form 
\begin{equation*} \label{eq:extlat}
    \{\vz_i = \vh\phi(i)+ \vDelta \pmod{\vone} : i=0,1, \ldots \} \in [0,1)^s,
\end{equation*}
where $\{\phi(\cdot)\}_{i=0}^\infty$ is the van der Corput sequence in base $b$.  (The base is suppressed in our notation for simplicity.) In this case $\vh$ must be a generalized integer as defined in \cite[Section 2]{HicNie03a}. An example of a two-dimensional shifted extensible lattice is shown in \Cref{fig:lattice-plot}. 

\begin{figure}[H]
\centering
\includegraphics[width=11cm]{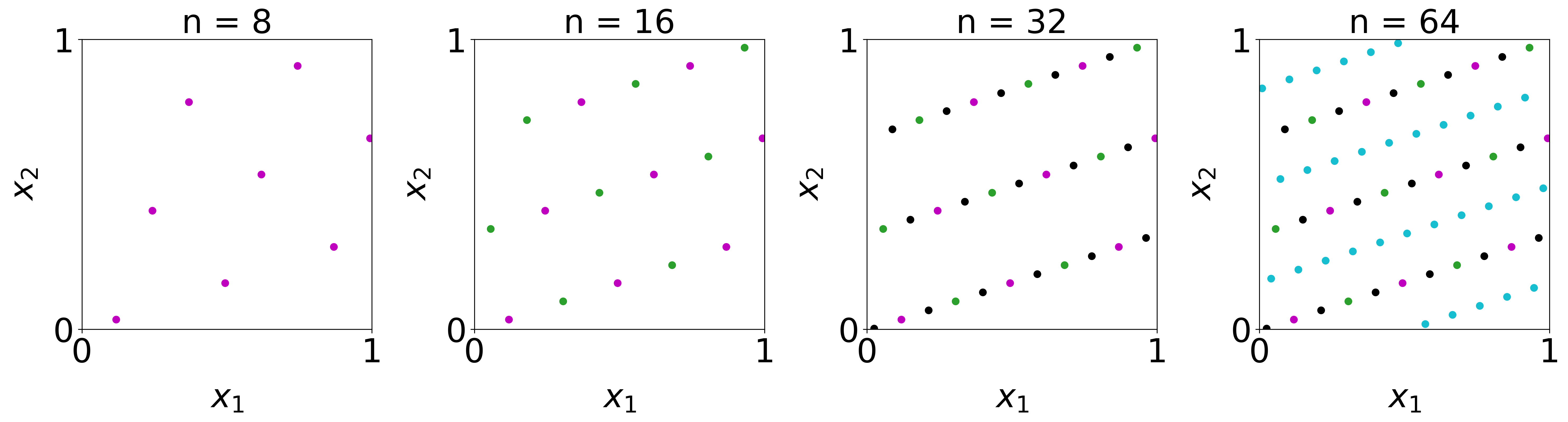}
\caption{The first 64 points of two-dimensional shifted, extensible lattice. As $n$ increases, the square becomes more uniformly filled.}
\label{fig:lattice-plot}
\end{figure}

The van der Corput sequence is defined as: 
\[
\phi((\cdots i_2 i_1 i_0)_b) = {}_b0.i_0 i_1 i_2 \cdots.
\]
For example, for $b=2$,
\[
\phi(6) = \phi(110_2) = {}_20.011 = \frac 38.
\]
Note that the first ${b^m}$ points of the van der Corput sequence are just equally spaced points reordered. 
\begin{equation} \label{eq:phipropone}
\{ \phi(i) : i = 0, \ldots, b^m-1 \} = \{0, b^{-m}, 2\times b^{-m}, \ldots, 1 - b^{-m} \}.
\end{equation}
Also note that the set of $b^m$ points starting at $\lambda b^m$ for positive integer $\lambda$ is a shift by $\phi(\lambda \times b^m)$ of the first $b^m$ points:
\begin{multline} \label{eq:phiproptwo}
\{ \phi(i) : i = \lambda \times b^m , \ldots, (\lambda+1)b^m-1 \} \\
= \{\phi(\lambda \times b^m) + 0, \phi(\lambda \times b^m) + b^{-m}, \ldots, \phi(\lambda \times b^m) + 1 - b^{-m} \} , \\
\lambda \in \natzero.
\end{multline}

\cite{HicNie03a} provided bounds on the worst-case cubature error using the first $n= b, b^2, \ldots$ points as sampling sites. Their result is quoted below in \eqref{eq:Niedbd}. The case of using the first \emph{arbitrary} $n$ nodes of a lattice sequence for cubature in a Hilbert space was discussed in \cite{HicEtal10a}[Corollary 1].  There it was shown that if a lattice sequence in base $b$ has the worst case error, no greater than $C(b^m)^{-\alpha}$ for the first $b^m$ points, for $m=0, 1, 2, \ldots$, for some constants $C, \alpha > 0$, then the worst case error of using the first $n \geq 2$ points of the lattice sequence is no greater than $ C \log_b(n) \, {n^{-\min(\alpha,1)}}$. 

We extend the bounds on $P_\alpha$ given in \cite{HicNie03a} to arbitrary $n$ in \Cref{thm:one}.  Moreover, in the Hilbert space setting considered by \cite{HicEtal10a}, we derive an error bound that does not include their additional logarithmic term for $\alpha > 1$ in \Cref{thm:three}.

\section{Worst case error analysis for arbitrary $n$}
This section follows the traditional error analysis presented in \cite{DicEtal22a}. However, we extend the theory to the case where the nodeset is not a full lattice. 
We first consider integrands, $f$, that have an absolutely summable Fourier series:
\begin{equation*} %\label{eq:fseries}
    f(\vx) = \sum_{\vk \in \integers^{s}} \tf(\vk) \me^{2 \pi \sqrt{-1} \vk^T \vx}, \qquad \text{where } \tf(\vk) = \int_{[0,1)^{s}} f(\vx) \me^{-2 \pi \sqrt{-1} \vk^T \vx}\, \dif \vx.
\end{equation*}
Given the positive coordinate weights, $\gamma_1$, $\gamma_2, \ldots$, we define a Banach space of integrands by weighting the Fourier coefficients: 
\begin{gather}
\nonumber
\tr(\vk,\vgamma) = \prod_{j=1}^{s} r(k_{j},\gamma_{j}),
\qquad \text{where } r(k_{j},\gamma_{j})=\begin{cases} 1, &
k_{j}=0, \\ \gamma_{j}^{-1}\abs{k_{j}}, & k_{j} \ne 0,  \end{cases}
\\
\label{eq:Banachspace}
\cf_{\alpha} = \{ f \in \cl_2[0,1)^{s} :
\norm[\cf{\alpha}]{f} < \infty \}, \qquad
\norm[\cf{\alpha}]{f} := \sup_{\vk \in \integers^{s}}
\left(\tr(\vk,\vgamma)^{\alpha} \abs{\tilde{f}(\vk)} \right).
\end{gather}
Here, the parameter $\alpha$ denotes the smoothness of the integrands.

It then follows that the worst-case error in approximating the integral by the sample mean using arbitrary point set is
\begin{align} \label{eq:wcerrPalpha}
\nonumber
\MoveEqLeft{\sup_{\norm[\cf_{\alpha}]{f} \le 1} \abs{\int_{[0,1)^{s}} f(\vx) \, \dif \vx - \appxint_n(f)}} \\
&
\nonumber
= \sup_{\norm[\cf_{\alpha}]{f} \le 1} \abs{\sum_{\vk \in \integers^{s}} \tf(\vk) \left[\int_{[0,1)^{s}} \me^{2 \pi \sqrt{-1} \vk^T \vx} \, \dif \vx - \appxintn(\me^{2 \pi \sqrt{-1} \vk^T \cdot})\right]} \\
& \nonumber= \sup_{\norm[\cf_{\alpha}]{f} \le 1} \abs{\sum_{\vk \in \integers^s \setminus\{\vzero\}} \tf(\vk) \left[ \appxintn(\me^{2 \pi \sqrt{-1} \vk^T \cdot})\right]} \\
& = \underbrace{\sum_{\vk \in \integers^s \setminus\{\vzero\}} \abs{\appxint_n(\me^{2 \pi \sqrt{-1} \vk^T \cdot})}\tr(\vk,\vgamma)^{-\alpha}}_{=: P_\alpha(\vh,\vgamma,n,s) = \text{quality of the lattice nodes}}, \\
\nonumber\end{align}
where we take advantage of H\"older's inequality and note that the error term vanishes for $\vk = \vzero$, and the integral itself vanishes for all $\vk \ne \vzero$.
This notation follows \cite[(4)]{HicNie03a} but we 
%replace 1:s in \cite.. by s, 2 bullets
\begin{itemize}
    \item Do not assume the first $n$ points form a lattice and
    \item Replace their $\onetos$ by $s$.
\end{itemize}
For an extensible lattice with $n = b^m$, the quality measure $P_\alpha(\vh,\vgamma,n, s)$ takes on a simple form.  Note that
\begin{align*}\label{Eq:DualLattice}
    \MoveEqLeft{\appxint_{b^m}(\me^{2 \pi \sqrt{-1} \vk^T \cdot})} \\
    \nonumber
    &= \frac 1{b^m} \sum_{i=0}^{b^m-1} \me^{2 \pi \sqrt{-1} \vk^T (\vh\phi(i) + \vDelta \bmod{\vone})} \\
    \nonumber
    &= \frac {\me^{2 \pi \sqrt{-1} \vk^T \vDelta}}{b^m} \sum_{i=0}^{b^m-1} \me^{2 \pi \sqrt{-1} \vk^T \vh\phi(i)} \\
    \nonumber
    &= \frac {\me^{2 \pi \sqrt{-1} \vk^T \vDelta}}{b^m} \sum_{i=0}^{b^m-1} \me^{2 \pi \sqrt{-1} \vk^T \vh i/b^m} \qquad \text{by \eqref{eq:phipropone}}\\
    \nonumber
    &= \frac {\me^{2 \pi \sqrt{-1} \vk^T \vDelta}}{b^m} \times
    \begin{cases}
    \frac{\me^{2 \pi \sqrt{-1} \vk^T \vh} - 1}{\me^{2 \pi \sqrt{-1} \vk^T \vh/b^m} - 1} = 0, & \vk^T \vh \pmod{b^m} \ne 0\\
    b^m, & \vk^T \vh \pmod{b^m} = 0
    \end{cases} \\
    \nonumber
    & = \me^{2 \pi \sqrt{-1} \vk^T \vDelta} \bbone_{B(\vh,m,s)}(\vk).
\end{align*}
Thus it follows that
\begin{equation} \label{eq:Palphadual}
    P_\alpha(\vh,\vgamma,b^m,s) = \sum_{\vk \in B(\vh,m,s)} \tr(\vk,\vgamma)^{-\alpha},
\end{equation}
where $B(\vh,m,s) : = \{\vk \in  \integers^s \setminus \{\vzero\} : \vk^T \vh \pmod{b^m} = 0\}$ is the \emph{dual lattice} minus the origin.  This corresponds to \cite[(3)]{HicNie03a}.  

Although \eqref{eq:Palphadual} is an infinite sum, $P_{2\alpha}$ for positive integer $\alpha$ can be converted to a finite sum in terms of Bernoulli polynomials:
\begin{equation} \label{eq:P2alpha}
P_{2\alpha}(\vh,\vgamma,b^m,s) = -1 + \frac{1}{n}\sum_{i = 0}^{n-1}  
\prod_{\ell = 1}^s [1 + \tilde{\gamma}_{\ell}^{2\alpha}\Bern_{2\alpha}(x_{i,\ell})],
\end{equation}
where 
\begin{equation}  \label{eq:gammatilde}
    \tilde{\gamma}_{\ell}^{2\alpha} = \frac{(2\pi\gamma_{\ell} )^{2\alpha}}{(-1)^{\alpha +1}(2\alpha)!}.
\end{equation}
This follows because according to \cite[Equation 24.8.3]{OlvEtal10a}, the even Bernoulli polynomials have an expansion of 
\[
\Bern_{2\alpha}(x) = \frac{(-1)^{\alpha+1} (2\alpha)!}{(2 \pi)^{2\alpha}} \sum_{k \in \mathbb{Z}, \ k \ne 0} \frac{\exp(2\pi\sqrt{-1} k x)}{k^{2\alpha}}, \quad 0 \le x \le 1, \ \alpha = 1, 2, \ldots.
\]
However, for $n \ne b^m$ such a finite sum does not exist.

We know by \cite[Theorem 5]{HicNie03a} that  there exists $\vh$ with
\begin{multline} \label{eq:Niedbd}
    P_{\alpha}(\vh,\vgamma,b^m,s) \le C_{P}(b,\alpha,\vgamma,\epsilon,s)
    b^{-m\alpha} (\log b^{m})^{\alpha(s+1)} [\log \log (
    b^m+1)]^{\alpha(1+\epsilon)}, \\ m = 1, 2,\ldots, \quad \alpha \ge 1.
\end{multline}
We emphasize that the generating vector, $\vh$ does not depend on $m$.

This result can be extended to the case when $m = 0$ by noting that
\begin{align*}
    \MoveEqLeft{P_{\alpha}(\vh,\vgamma,1,s)} \\
    & = \sum_{\vk \ne \vzero} \tr(\vk, \vgamma)^{-\alpha} = -1 + \sum_{\vk \in \integers^s} \tr(\vk,\vgamma)^{-\alpha} \\
    & = -1 + \sum_{k_1 =-\infty}^{\infty} \cdots \sum_{k_s  =-\infty}^{\infty} \frac{1}{\prod_{j=1}^s[\max(1,\gamma_j^{-1}\abs{k_j})]^\alpha} \\
    & = -1 + \sum_{k_1 =-\infty}^{\infty} \frac{1}{[\max(1,\gamma_1^{-1}\abs{k_1})]^\alpha} \cdots \sum_{k_s  =-\infty}^{\infty} \frac{1}{[\max(1,\gamma_s^{-1}\abs{k_s})]^\alpha} \\
    & = -1 + \prod_{j=1}^s \left[\sum_{k =-\infty}^{\infty} \frac{1}{[\max(1,\gamma_j^{-1}\abs{k})]^\alpha} \right] \\
    & = -1 + \prod_{j=1}^s \left[1 + 2 \gamma_j^{\alpha} \sum_{k =1}^{\infty} \frac{1}{k^\alpha} \right]\\
    & = -1 +  \prod_{j=1}^s \left[1 + 2 \gamma_j^{\alpha}\zeta(\alpha) \right],
\end{align*}
where $\zeta$ is the Riemann zeta function.

Therefore, the above formula \eqref{eq:Niedbd} can be extended to the case of $m=0$ as follows:
\begin{multline} \label{eq:Palphaextm}
    P_{\alpha}(\vh,\vgamma,b^m,s) \\
    \le C_{P}(b,\alpha,\vgamma,\epsilon,s)
    b^{-m\alpha}\max(1, (\log b^{m})^{\alpha(s+1)})[\max(1,\log \log (
    b^m+1))]^{\alpha(1+\epsilon)}, \\ m =0, 1, 2,\ldots, \quad \alpha > 1.
\end{multline} 
In the next section, we extend this upper bound to all positive $n$. 

\section{The figure of merit $P_{\alpha}(\vh,\vgamma,n,s)$ for  $n$ other than $b^m$} \label{sec:Palphabd}
Starting from the upper bound for $P_{\alpha}(\vh,\vgamma,n,s)$ for $n$ a non-negative power of the base, $b$, it is possible to obtain an analogous result arbitrary $n$: 

\begin{theorem} \label{thm:one}
    For fixed integer base $b \ge 2$, a fixed smoothness parameter $\alpha > 1$, a fixed vector of coordinate weights $\vgamma \in [0,\infty)^\infty$, and a fixed $\epsilon > 0$, 
    
    \begin{enumerate}
    \renewcommand{\labelenumi}{\alph{enumi}.}
        \item For arbitrary $n \in \naturals$,
        
        \begin{enumerate}
        \renewcommand{\labelenumii}{\roman{enumii}.}
            \item There exists a generating vector for the lattice, $\vh$, for which the figure of merit, $P_\alpha$, has the following upper bound for arbitrary positive integer $n$:
    \begin{multline} \label{eq:mainresultone}
        P_\alpha(\vh,\vgamma,n,s) 
        \le \frac {C_{P}(b,\alpha,\vgamma,\epsilon,s)}{n} \frac{(b-1)}{1 - b^{1-\alpha}} \\
        \times 
  \max(1, (\log n)^{\alpha(s+1)})
      [\max(1,\log \log (n+1))]^{\alpha(1+\epsilon)} \\
      \forall n , s \in \naturals.
    \end{multline}
    This is the same  constant, $C_{P}(b,\alpha,\vgamma,\epsilon,s)$, and the same power of logarithm as in \eqref{eq:Palphaextm} for $n = b^{m}$, but the power of $n$ is fixed at $-1$.

    \item The powers of log may be eliminated in favor of a larger constant, as follows:
    \begin{equation} \label{eq:mainresultoneb}
        P_\alpha(\vh,\vgamma,n,s) 
        \le \frac {C_{P}(b,\alpha,\vgamma,\epsilon,s) S(b, \alpha, \epsilon, s)}{n} \qquad
      \forall n , s \in \naturals,
    \end{equation}
    where the constant $S(b, \alpha, \epsilon, s)$ is defined in \eqref{eq:infseries}.

    \end{enumerate}

    \item However, if $n = \lambda b^p$ for some positive integer $p$ and some positive integer  $\lambda$ that is relatively prime with respect to $b$, then, %
    $P_\alpha$ has an upper bound that is similar to the case of $n=b^m$:
    \begin{multline} \label{eq:mainresulttwo}
         P_\alpha(\vh,\vgamma,n,s)
         \le \frac{{C}_{P}(b,\alpha, \vgamma,\epsilon,s)}{n^{\alpha}}  \frac {\lambda^{\alpha -1}(b-1)} {1 - b^{1- \alpha}}  \\
         \times \max(1, (\log n )^{\alpha(s+1)})) [\max(1,\log \log (
    n+1))]^{\alpha(1+\epsilon)} \\
 \ n, \lambda, s \in \naturals.
    \end{multline} 
    Again, the constant $C_{P}(b,\alpha,\vgamma,\epsilon,s)$ and the powers of the logarithms are the same as in \eqref{eq:Palphaextm}. Now, the power of $n$ is similar to that in \eqref{eq:Palphaextm}, but there is an extra factor of $\lambda^{\alpha -1}(b-1)/(1 - b^{1-\alpha})$.  Thus it is especially useful for small $\lambda$. (The parameter $p = \log_b(n/\lambda)$ does not appear above because it can be written in terms of $n$ and $\lambda$.)

     \end{enumerate}
    
\end{theorem}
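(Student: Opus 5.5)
The approach is to decompose the first $n$ indices according to the base-$b$ digits of $n$. Write $n = \sum_{m=0}^{M} d_m b^m$ with digits $d_m \in \{0,\ldots,b-1\}$ and $M = \lfloor \log_b n \rfloor$. Then $\{0,\ldots,n-1\}$ splits into at most $\sum_m d_m \le (b-1)(M+1)$ consecutive blocks. Each block has the form $\{\lambda' b^m, \ldots, (\lambda'+1)b^m - 1\}$: the blocks of size $b^M$ come first, then those of size $b^{M-1}$, and so on. By \eqref{eq:phiproptwo}, each block's van der Corput values are a shift of the first $b^m$ values. Hence the block's contribution to $\appxint_n(\me^{2\pi\sqrt{-1}\vk^T\cdot})$ equals $(b^m/n)$ times a unimodular factor times $\bbone_{B(\vh,m,s)}(\vk)$. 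The triangle inequality then gives
\begin{equation*}
\abs{\appxint_n(\me^{2\pi\sqrt{-1}\vk^T\cdot})} \le \frac1n\sum_{m=0}^{M} d_m b^m \bbone_{B(\vh,m,s)}(\vk),
\end{equation*}
and summing against $\tr(\vk,\vgamma)^{-\alpha}$ yields
\begin{equation*}
P_\alpha(\vh,\vgamma,n,s) \le \frac{1}{n}\sum_{m=0}^{M} d_m\, b^m P_\alpha(\vh,\vgamma,b^m,s).
\end{equation*}
We use the single $\vh$ from \eqref{eq:Palphaextm}, which works for all $m$ simultaneously.

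For part a.i, insert \eqref{eq:Palphaextm} and bound $d_m \le b-1$. Since $b^m \le n$, the logarithmic factors are monotone in $m$ and can be bounded by their values at $n$. What remains is $\sum_{m=0}^M b^{m(1-\alpha)} \le 1/(1-b^{1-\alpha})$, which gives \eqref{eq:mainresultone}. For part a.ii, we keep the logarithms inside the sum instead of bounding them at $n$, and extend the sum to $m=\infty$. The constant is then
\begin{equation*}
S = (b-1)\sum_{m=0}^\infty b^{m(1-\alpha)}\max(1,(m\log b)^{\alpha(s+1)})\,[\max(1,\log\log(b^m+1))]^{\alpha(1+\epsilon)},
\end{equation*}
which is presumably \eqref{eq:infseries}. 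It converges because $\alpha>1$ makes the geometric factor dominate the polynomial-in-$m$ logarithms.

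For part b, take $n=\lambda b^p$. The base-$b$ digits of $n$ vanish below position $p$, so every block has size $b^m$ with $m\ge p$. The same bound now reads $\frac1n\sum_{m=p}^M d_m b^{m(1-\alpha)}\cdot(\text{logs})$. Bounding $d_m\le b-1$ and summing the tail geometric series gives at most $(b-1)b^{p(1-\alpha)}/(1-b^{1-\alpha})$ times the logs at $n$. Finally, $b^{p(1-\alpha)}/n = \lambda^{\alpha-1} n^{-\alpha}$ because $b^p = n/\lambda$. This yields \eqref{eq:mainresulttwo}. Coprimality of $\lambda$ with $b$ is used only to pin down $p$ as the exact power of $b$ dividing $n$; the argument only needs $b^p\mid n$.

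The main obstacle is the block decomposition. We must check that when the blocks are taken in decreasing size, each block start is a multiple of its own size $b^m$, so that \eqref{eq:phiproptwo} applies. This holds because the start is $\sum_{j>m} d_j b^j + c\, b^m$ for some $0 \le c < d_m$. We also need the log factors to be monotone in $m$, so that replacing $b^m$ by $n$ is legitimate; this is clear since both $\max(1,\log x)$ and $\max(1,\log\log(x+1))$ are nondecreasing.
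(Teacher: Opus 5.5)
Your proposal is correct and follows the paper's proof essentially step for step. Both arguments split the first $n$ indices into blocks by the base-$b$ digits of $n$ and apply \eqref{eq:phiproptwo} to each block. The triangle inequality then gives $P_\alpha(\vh,\vgamma,n,s)\le \frac1n\sum_m d_m b^m P_\alpha(\vh,\vgamma,b^m,s)$, and the rest is inserting \eqref{eq:Palphaextm}, bounding $d_m\le b-1$, and summing a geometric series (from $m=p$ in part b). Your block-by-block handling of the individual shifts is, if anything, slightly more careful than the paper's.

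One small point concerns part a.ii. The paper's \eqref{eq:infseries} defines $S$ without the factor $b-1$, but its own chain \eqref{eq:sameargument} actually yields $(b-1)C_P S/n$. Your $S$, with $b-1$ absorbed, is therefore the one that genuinely makes \eqref{eq:mainresultoneb} follow.
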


Before giving the proof of this theorem, we note that 

\begin{itemize}
    \item In \eqref{eq:mainresultone} the logarithmic terms are the same as for $n=b^m$, but the power of $n$ is only $-1$ for all $\alpha > 1$.  This degradation is inevitable as noted in \cite{HicEtal10a} and elsewhere.
    \item In \eqref{eq:mainresulttwo} the dependence of the upper bound $P_\alpha$ on $n$ is the same as for $n=b^m$ (except for multiplicative constants), but  here it is implicitly assumed that $n$ is a small multiple of $b^m$.  Otherwise, the $\lambda^{\alpha-1}$ term might be substantial.
\end{itemize}

\subsection{Proof of \eqref{eq:mainresultone}}
The key tool in this proof is the triangle inequality.  A careful accounting of terms keeps the upper bound reasonably conservative.

Let any non-negative integer $n$ be expressed as  its $b$-ary expansion, i.e., $n = n_0 + bn_1 + \cdots + b^m n_m$, where the digits $n_0, n_1, \ldots$ are in $\{0, 1, \cdots, b-1\}$, and $n_m > 0$. Thus, $m = \lfloor \log_b(n) \rfloor$.  Let $\trun(n,\ell) := n_0 + \cdots + b^{\ell} n_{\ell}$ for $\ell = 0, \ldots, m$ and $\trun(n,-1) := 0$.  Then the lattice rule approximation to the integral of the complex exponential function may be expressed in terms of a sum involving whether the wavenumber lies in the dual lattice:  
\begin{align*}
    \MoveEqLeft{\appxint_{n}(\me^{2 \pi \sqrt{-1} \vk^T \cdot}) } \\
    &= \frac 1{n} \sum_{i=0}^{n-1} \me^{2 \pi \sqrt{-1} \vk^T (\vh\phi(i) + \vDelta \pmod{\vone})} \\
    \nonumber
    &= \frac {\me^{2 \pi \sqrt{-1} \vk^T \vDelta}}{n} \sum_{\ell = 0}^{m} \,
    \sum_{i=n - \trun(n,m-\ell) }^{n - \trun(n,m-\ell-1)  -1} \me^{2 \pi \sqrt{-1} \vk^T \vh\phi(i)} \\
    &= \frac {\me^{2 \pi \sqrt{-1} \vk^T \vDelta}}{n} \sum_{\ell = 0}^{m} \, n_{m-\ell}
    \sum_{i=0 }^{b^{m-\ell}  -1} \me^{2 \pi \sqrt{-1} \vk^T \vh[i/b^m + \phi(n - \trun(n,m-\ell))]} \\
    & \hspace{30ex} \text{by \eqref{eq:phiproptwo}} \\
    & = \frac {\me^{2 \pi \sqrt{-1} \vk^T \vDelta}}{n} \sum_{\ell = 0}^{m} b^{m-\ell} n_{m - \ell} \bbone_{B(\vh,m-\ell,s)}(\vk) \me^{2 \pi \sqrt{-1} \vk^T \vh\phi(n - \trun(n,m-\ell))}.
\end{align*}
 The absolute value of this lattice rule cubature is then bounded above by the triangle inequality as 
\begin{equation} \label{eq:ihatarbn}
    \abs{\appxint_{n}(\me^{2 \pi \sqrt{-1} \vk^T \cdot})} \le \frac {1}{n} \sum_{\ell = 0}^{m} b^{m-\ell} n_{m - \ell} \bbone_{B(\vh,m-\ell, s)}(\vk).
\end{equation}

The figure of merit, $P_\alpha(\vh,\vgamma,n,s)$, depends on how small these quantities can be made by a good choice of $\vh$, namely,
\begin{align} \label{eq:Palphan}
      \nonumber
      P_\alpha(\vh,\vgamma,n,s) & = \sum_{\vk \in \integers^s \setminus\{\vzero\}} \abs{\appxint_n(\me^{2 \pi \sqrt{-1} \vk^T \cdot})}\tr(\vk,\vgamma)^{-\alpha} \qquad \text{by \eqref{eq:wcerrPalpha}} \\
      \nonumber
      & \le \frac {1}{n} \sum_{\ell = 0}^{m} b^{m-\ell} n_{m - \ell} \sum_{\vk \in B(\vh,m-\ell,s)} \tr(\vk,\vgamma)^{-\alpha}
      \qquad \text{by \eqref{eq:ihatarbn}}\\
      & = \frac {1}{n} \sum_{\ell = 0}^{m} b^{m-\ell} n_{m - \ell} P_\alpha(\vh,\vgamma,b^{m-\ell},s)
      \qquad \text{by \eqref{eq:Palphadual}}.
\end{align}
Note that $P_\alpha$ for arbitrary $n$ has now been bounded above by a sum of $P_\alpha$ for powers of the base.  

The next part of the proof proceeds by substituting in the upper bounds for these $P_\alpha$ from \eqref{eq:Palphaextm}:

\begin{align} \label{eq:sameargument}
      \nonumber
      \MoveEqLeft
      P_\alpha(\vh,\vgamma,n,s) \\ & 
      \nonumber 
       \le  \frac {1}{n} \left\{ b^m n_m P_\alpha(\vh,\vgamma,b^m,s) + b^{m-1} n_{m-1} P_\alpha(\vh,\vgamma,b^{m-1},s) \right . \\
       \nonumber
       & \left . \qquad + \cdots + b^0 n_0 P_\alpha(\vh,\vgamma,b^0,s)  \right\} \\
       \nonumber
       &\le \frac {(b-1)}{n} \left\{ b^m  P_\alpha(\vh,\vgamma,b^m,s) + b^{m-1}  P_\alpha(\vh,\vgamma,b^{m-1},s) \right . \\
       \nonumber
       & \qquad \left . + \cdots + b^0  P_\alpha(\vh,\vgamma,b^0,s)  \right\} 
      \qquad \text{since $n_m \le b-1$}\\
      \nonumber
      & \le \frac {(b-1)C_{P}(b,\alpha,\vgamma,\epsilon,s)}{n} \\
      \nonumber
      & \qquad \times \left\{ b^{m(1-\alpha)}\max(1, (\log b^{m})^{\alpha(s+1)}) [\max(1,\log \log (
    b^m+1))]^{\alpha(1+\epsilon)}\right .\\ 
    \nonumber
      & \qquad  + b^{(m-1)(1-\alpha)}\max(1, (\log b^{(m-1)})^{\alpha(s+1)}) [\max(1,\log \log (
    b^{(m-1)}+1))]^{\alpha(1+\epsilon)}\\
      \nonumber
      & \qquad + \cdots \\
       \nonumber
      & \qquad  \left . + b^{-0}\max(1, (\log b^{0})^{\alpha(s+1)}) [\max(1,\log \log (
    b^{0}+1))]^{\alpha(1+\epsilon)} \right\}\\
      & \qquad \qquad \qquad \alpha > 1 \qquad \text{by \eqref{eq:Palphaextm}} .
\end{align}

Finally, we factor out the largest logarithmic factors and sum the geometric series:
\begin{align*}
\MoveEqLeft P_\alpha(\vh,\vgamma,n,s) \\
      & \le \frac {(b-1)C_{P}(b,\alpha,\vgamma,\epsilon,s)}{n} \frac{(1 - b^{(m+1)(1-\alpha)})}{1 - b^{1-\alpha}}
      \max(1, (\log b^{m})^{\alpha(s+1)}) \\
      & \qquad \qquad \times [\max(1,\log \log (
    b^m+1))]^{\alpha(1+\epsilon)} \\
      & \le \frac {C_{P}(b,\alpha,\vgamma,\epsilon,s)}{n} \frac{(b-1)}{1 - b^{1-\alpha}}
  [\max(1, (\log n)^{\alpha(s+1)})
      [\max(1,\log \log (n+1))]^{\alpha(1+\epsilon)} 
\end{align*}
This completes the proof of \eqref{eq:mainresultone}. 
% mention Hickernell&friends result
Unfortunately, the upper bound decays only like $\Order(n^{-1+\delta})$.  This is due to the $m = 0$ term in the sum above.  This suggests our later result in \eqref{eq:mainresulttwo} that if the small powers of $b$ vanish, we can get a bound with a better power of $n$.

\subsection{Proof of \eqref{eq:mainresultoneb}}  We may remove the logarithmic factors in favor of a larger constant.  Since $\alpha > 1$, the infinite series 
\begin{multline} \label{eq:infseries}
    S(b, \alpha, \epsilon, s) : = b^{0}\max(1, (\log b^{0})^{\alpha(s+1)}) [\max(1,\log \log (
    b^{0}+1))]^{\alpha(1+\epsilon)} \\
    + b^{1(1-\alpha)}\max(1, (\log b^{1})^{\alpha(s+1)}) [\max(1,\log \log (
    b^{1}+1))]^{\alpha(1+\epsilon)}
    + \cdots \\
    + b^{m(1-\alpha)}\max(1, (\log b^{m})^{\alpha(s+1)}) [\max(1,\log \log (
    b^m+1))]^{\alpha(1+\epsilon)} + \cdots 
\end{multline}
converges.  This implies \eqref{eq:mainresultoneb} by \eqref{eq:sameargument}.

\subsection{Proof of \eqref{eq:mainresulttwo}}
Next, suppose $n = \lambda b^p$, where $\lambda$ is an integer that is relatively prime with respect to $b$.  The proof proceeds similarly to that in the previous subsection, but we obtain a more favorable power of $n$.

The  $b$-ary expansion of $n$ takes the form:  $n = b^pn_p + b^{p+1}n_{p+1} + \cdots + b^m n_m$. 
Since the lattice rule cubature is bounded above \eqref{eq:ihatarbn}, it follows that $P_\alpha$ for $n = \lambda b^p$ can also be bounded above by a sum of $P_\alpha$ for powers of the base: 
\begin{align*}
    P_\alpha(\vh,\vgamma,\lambda b^p,s)
    & = \sum_{\vk \in \integers^s \setminus\{\vzero\}} \abs{\appxint_{\lambda b^p}(\me^{2 \pi \sqrt{-1} \vk^T \cdot})}\tr(\vk,\vgamma)^{-\alpha} \qquad \text{by \eqref{eq:wcerrPalpha}} \\
    & \le \frac {1}{\lambda b^p} \sum_{\ell = 0}^{m-p} b^{m-\ell} n_{m -\ell} P_\alpha(\vh,\vgamma,b^{m-\ell},s) \qquad \text{by \eqref{eq:Palphadual}}
\end{align*}
Similarly to the proof for arbitrary $n$, we substitute in the upper bounds for $P_\alpha$:  
\begin{align} \label{eq:sameargument1}
    \nonumber
    \MoveEqLeft[1]
   P_\alpha(\vh,\vgamma,\lambda b^p,s)  \\
    &\le \frac {1}{\lambda b^p} \left\{ b^m n_m P_\alpha(\vh,\vgamma,b^m,s) + b^{m-1} n_{m-1} P_\alpha(\vh,\vgamma,b^{m-1},s) \right. \\
     \nonumber
    &\qquad + \cdots + b^p n_p P_\alpha(\vh,\vgamma,b^p,s) 
    \bigg\} \\ 
     \nonumber
    & \le \frac {(b-1)}{\lambda b^p}  
    \left\{
    b^m  P_\alpha(\vh,\vgamma,b^m,s) + b^{m-1}  P_\alpha(\vh,\vgamma,b^{m-1},s) \right. \\
     \nonumber
    &\qquad + \cdots + b^p  P_\alpha(\vh,\vgamma,b^p,s) 
    \bigg\} \\
     \nonumber
    & \le \frac {(b-1)}{\lambda b^p} C_{P}(b,\alpha,\vgamma,\epsilon,s) \\
     \nonumber
    &\qquad \times \left\{ b^{m(1-\alpha)}\max\left(1, (\log b^m)^{\alpha(s+1)}\right) \max\left(1,\log \log (b^m+1)\right)^{\alpha(1+\epsilon)} \right.\\ 
     \nonumber
    & \qquad  + b^{(m-1)(1-\alpha)}\max\left(1, (\log b^{(m-1)})^{\alpha(s+1)}\right) \\
    & \nonumber \qquad \qquad \times \max\left(1,\log \log (b^{(m-1)}+1)\right)^{\alpha(1+\epsilon)}\\
     \nonumber
    &  \qquad + \cdots \\
    & \qquad  \left . +   b^{p(1-\alpha)}\max\left(1,(\log b^p)^{\alpha(s+1)}\right) \max\left(1,\log \log (b^p+1)\right)^{\alpha(1+\epsilon)}  \right\}.
\end{align}
Next, we factor out the largest logarithmic term and compute the sum of the geometric series: 
\begin{align*}
     P_\alpha(\vh,\vgamma,\lambda b^p,s)
     & \le \frac { (b-1) C_{P}(b,\alpha, \vgamma, \epsilon,s)}{\lambda b^p} \left\{ b^{p(1-\alpha)} \cdot \frac{(1 - b^{(m-p+1)(1- \alpha )})}{1 - b^{1- \alpha}}\right\} \\
    & \qquad  \times \max(1, (\log b^{m})^{\alpha(s+1)})) [\max(1,\log \log (
    b^m+1))]^{\alpha(1+\epsilon)} \\
    &\leq  \frac {(b-1)C_{P}(b,\alpha, \vgamma, \epsilon,s)}{\lambda b^{p\alpha}}  \frac{1 - b^{(m-p+1)(1- \alpha )}}{1 - b^{1- \alpha}} \\
    &\qquad \times \max(1, (\log b^{m})^{\alpha(s+1)})) [\max(1,\log \log (
    b^m+1))]^{\alpha(1+\epsilon)} \\
    &\le \frac{\lambda^{\alpha -1}{C}_{P}(b,\alpha, \vgamma,\epsilon,s)}{n^{\alpha}}  \frac {(b-1)} {1 - b^{1- \alpha}} \\
    &\qquad \times \max(1, (\log n )^{\alpha(s+1)})) [\max(1,\log \log (
    n+1))]^{\alpha(1+\epsilon)}   
\end{align*}

This completes the proof of \eqref{eq:mainresulttwo}. In this case of fixed $\lambda$, the upper bound decays nearly like $\Order(n^{- \alpha})$ as $n \rightarrow \infty$, i.e, $p \to \infty$,  
which is an improvement on \eqref{eq:mainresultone}. However, for $\lambda/n$ bounded away from zero as $n \to \infty$ the decay of $P_\alpha$ is the same as that in \eqref{eq:mainresultone}.

\subsection{The figure of merit $P_{\alpha}(\vh,\vgamma,\lowercase{n},s)$ when the coordinate weights satisfy summability condition }

\Cref{thm:one} makes no assumptions on the coordinate weights. In this section, we demonstrate how to eliminate the logarithmic terms in our upper bounds in \Cref{thm:one}---which grow exponentially as $s \to \infty$---by assuming a summability condition of the coordinate weights. In fact, we eliminate the $s$-dependence completely.

\begin{theorem} \label{thm:two}
Suppose $\alpha > 1$.
\begin{enumerate}\renewcommand{\labelenumi}{\alph{enumi}.}
    \item If $\sum_{j = 1}^{\infty} \gamma_j ^{\alpha/(1+2\delta)} < \infty$ for some $\delta$ satisfying $0 < \delta < (\alpha -1)/2$, then
\begin{equation} \label{eq:thmtwoa}
     P_\alpha(\vh,\vgamma,n, s)
        \leq \frac{\widetilde{C}_P(b,\alpha, \alpha/(1 + 2\delta), \vgamma, \delta)}{n} \frac{b - 1 }{1 - b^{-\delta}} \qquad
        \forall n, s \in \naturals.
\end{equation}

\item If $\sum_{j = 1}^{\infty} \gamma_j < \infty$ and $n = \lambda b^p$ for some positive integer $p$ and some positive integer $\lambda$ that is relatively prime with respect to $b$, then for any fixed $\delta $ such that $0 < \delta < \alpha - 1$:
\begin{equation*}
     P_\alpha(\vh,\vgamma,n,s)
        \leq \frac{\lambda^{\alpha - 1 - \delta }\widetilde{C}_P(b,\alpha, \vgamma, \delta)}{n^{\alpha - \delta}}\frac{(b-1)}{1 - b^{1 - \alpha + \delta}} \qquad 
        \forall  n, \lambda, s \in \naturals.
\end{equation*}
\end{enumerate}

\end{theorem}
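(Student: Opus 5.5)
The plan is to reuse the decomposition already established in the proof of \Cref{thm:one}, namely \eqref{eq:Palphan}, which bounds $P_\alpha(\vh,\vgamma,n,s)$ by $\frac1n\sum_{\ell} b^{m-\ell} n_{m-\ell} P_\alpha(\vh,\vgamma,b^{m-\ell},s)$. This is valid for any $\vh$, so only the input bound for powers of the base needs to change. In place of \eqref{eq:Palphaextm}, I will invoke the dimension-independent bound available under summable weights, in the style of \cite{HicNie03a}. There exists an extensible generating vector $\vh$, not depending on $m$ or $s$, with
\begin{equation*}
P_\alpha(\vh,\vgamma,b^m,s) \le \widetilde{C}_P\, b^{-m(\alpha-\delta')} \qquad \forall m \in \natzero,\ s\in\naturals,
\end{equation*}
where $\delta'$ is a small loss. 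The constant $\widetilde{C}_P$ is finite whenever $\sum_j \gamma_j^{\alpha/(1+2\delta)}<\infty$ (respectively $\sum_j\gamma_j<\infty$). The case $m=0$ is handled exactly as before: $P_\alpha(\vh,\vgamma,1,s)\le -1+\prod_{j}(1+2\gamma_j^\alpha\zeta(\alpha))$, and this product is bounded uniformly in $s$ because $\sum_j\gamma_j^\alpha<\infty$ under either hypothesis.

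For part (a), I take the power-of-base bound with exponent $1+\delta$, i.e.\ $P_\alpha(\vh,\vgamma,b^m,s)\le \widetilde{C}_P b^{-m(1+\delta)}$. The condition $\delta<(\alpha-1)/2$ guarantees $1+\delta<\alpha$ with room to spare, and that room is what the weight exponent $\alpha/(1+2\delta)$ pays for. Substituting this into \eqref{eq:Palphan} and using $n_{m-\ell}\le b-1$ gives
\begin{equation*}
P_\alpha(\vh,\vgamma,n,s)\le \frac{(b-1)\widetilde{C}_P}{n}\sum_{j=0}^{m} b^{j}b^{-j(1+\delta)}\le \frac{(b-1)\widetilde{C}_P}{n}\,\frac{1}{1-b^{-\delta}},
\end{equation*}
which is \eqref{eq:thmtwoa}.

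For part (b), with $n=\lambda b^p$ only the digits $n_p,\dots,n_m$ appear. I use the bound with exponent $\alpha-\delta$, which requires only $\sum\gamma_j<\infty$ for any $\delta\in(0,\alpha-1)$. The sum becomes $\sum_{j=p}^m b^{j(1-\alpha+\delta)}\le b^{p(1-\alpha+\delta)}/(1-b^{1-\alpha+\delta})$. Multiplying by $(b-1)/(\lambda b^p)$ gives $(b-1)\widetilde{C}_P\,\lambda^{-1}b^{-p(\alpha-\delta)}/(1-b^{1-\alpha+\delta})$. Writing $b^{-p}=\lambda/n$ then produces the stated factor $\lambda^{\alpha-1-\delta}n^{-(\alpha-\delta)}$.

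The main obstacle is the first step: justifying the $s$-independent, $m$-uniform bound $P_\alpha(\vh,\vgamma,b^m,s)\le\widetilde{C}_P b^{-m\beta}$ for a single extensible $\vh$. The plan is the averaging/Markov argument of \cite{HicNie03a}. For $\tau = 1/\beta$ slightly below $1/\alpha$ (roughly $\tau=(1+2\delta)/\alpha$ in part (a)), I will use Jensen's inequality $P_\alpha^{1/\beta}\le P_{\alpha/\beta}$. Averaging over generating vectors, the mean of $P_{\alpha/\beta}(\cdot,\vgamma,b^m,s)$ is at most $b^{-m}\prod_j(1+2\gamma_j^{\alpha/\beta}\zeta(\alpha/\beta))$, which is bounded independently of $s$ by the summability hypothesis. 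I will then choose $\vh$ so that this holds for all $m$ simultaneously, via a union bound with a summable sequence of probabilities. That union bound costs a factor like $b^{m\delta}$, which is exactly where the extra $\delta$ in the exponents comes from.
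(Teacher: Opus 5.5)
Your proposal is correct and is essentially the paper's proof. The paper takes the $s$-independent bound $P_\alpha(\vh,\vgamma,b^m,s)\le \widetilde{C}_P\, b^{-m(\alpha/a-\delta)}$ (valid when $\sum_j\gamma_j^a<\infty$, $a\in[1,\alpha]$) directly from \cite[Theorem 2, case (iii)]{HicNie03a} without re-deriving it, substitutes it into \eqref{eq:Palphan} with $a=\alpha/(1+2\delta)$ (exponent $1+\delta$) for part (a) and $a=1$ (exponent $\alpha-\delta$) for part (b), and sums the geometric series exactly as you do.

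The only flaw is in how you describe that cited bound and sketch its proof. In part (a) the loss $\alpha-(1+\delta)$ is not small, so the bound is not of the form $b^{-m(\alpha-\delta')}$ with small $\delta'$. There the Jensen exponent should be $\tau=1/(1+2\delta)$, so that $\alpha\tau=a$ is the weight exponent and $\tau$ lies \emph{above} $1/\alpha$, which is required for $P_{\alpha\tau}$ to be finite. It should not be slightly below $1/\alpha$, nor equal to $(1+2\delta)/\alpha$.
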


\begin{proof}
By \cite[Theorem 2, case (iii)]{HicNie03a} we know that:
\begin{equation} \label{eq:nosbd}
      P_\alpha(\vh,\vgamma,b^m,s)
        \leq \widetilde{C}_P(\alpha, a, \vgamma, \delta) (b^m)^{-\alpha/a + \delta},
\end{equation}
for $a \in [1, \alpha]$, provided that  $\sum_{j = 1}^{\infty} \gamma_j^a < \infty$. There are two cases.

The first case is to recover the $\Order(n^{-1})$ decay as in \eqref{eq:mainresultoneb} but without the dimension dependence.  Choose a small positive $\delta$ that is less than $(\alpha -1)/2$, and let $a = \alpha/(1 + 2\delta)$.   
Substituting \eqref{eq:nosbd} into \eqref{eq:Palphan} we get: %with $a = \alpha$ we get: %and following the same argument as in \eqref{eq:mainresultone}, we get the following upper bound on $P_{\alpha}$: 

\begin{align*}
    \MoveEqLeft{P_\alpha(\vh, \vgamma, n, s)} \\
    &\leq \frac{(b - 1)}{n} \widetilde{C}_P(b,\alpha, \alpha/(1 + 2\delta), \vgamma, \delta)  \Bigl[
        b^{m(1-\alpha/a + \delta)} + b^{(m-1)(1-\alpha/a + \delta)} + \cdots \\
        & \qquad \qquad 
     + b^{0(1-\alpha/a+ \delta)} \Bigr] \\
    &\leq \frac{(b - 1)}{n} \widetilde{C}_P(b,\alpha, \alpha/(1 + 2\delta), \vgamma, \delta)  \Bigg[ 
        b^{-m \delta} + b^{-(m-1)\delta} + \cdots + 1 \Big] \\
    &\leq \frac{\widetilde{C}_P(b,\alpha, \alpha/(1 + 2\delta), \vgamma, \delta)}{n} \frac{b - 1 }{1 - b^{-\delta}}.
\end{align*}

For the case $n = \lambda b^p$, in \Cref{thm:one} we proved that the fastest decay we can achieve is nearly $\Order(n^{-\alpha})$. The weakest assumption we can make on the coordinate weights in this case to achieve this decay is  $a = 1$. Then, from \eqref{eq:nosbd}, the upper bound on $P_\alpha$ for $n = \lambda b^p$ becomes: 

\begin{align*}
        P_\alpha(\vh, \vgamma, \lambda b^p, s) 
    &\leq \frac{(b - 1)}{\lambda b^p} \widetilde{C}_P(b,\alpha, 1, \vgamma, \delta)  \Bigg\{ 
        b^{m(1-\alpha + \delta)} + b^{(m-1)(1-\alpha + \delta)} + \cdots \\
        & \qquad \qquad 
     + b^{p(1-\alpha + \delta)} \Bigg\} \\
    &= \frac{(b - 1)}{\lambda b^p} \widetilde{C}_P(b,\alpha, 1, 1, \vgamma, \delta) \Bigg\{ \frac{b^{p( 1 - \alpha + \delta)}(1 - b^{(m-p+1)(1-\alpha +\delta)})}{1 - b^{1 - \alpha + \delta}} \Bigg\} \\
    &= \frac{(b - 1)}{\lambda b^{p\alpha - \delta}} \widetilde{C}_P(b,\alpha, 1, \vgamma, \delta) \Bigg\{ \frac{1 - b^{(m-p+1)(1-\alpha+\delta)}}{1 - b^{1 - \alpha + \delta}} \Bigg\} \\
    &\leq \frac{\lambda^{\alpha - 1 - \delta }\widetilde{C}_P(b,\alpha, 1, \vgamma, \delta)}{n^{\alpha - \delta}} \frac{(b-1)}{1 - b^{1 - \alpha + \delta}}. 
\end{align*}
We emphasize that these upper bounds on $P_\alpha$ are independent of $s$.
\end{proof}

The theorems above establish a theoretical upper bound on the figure of merit $P_\alpha$ for $n$ other than $b^m$, showing a slower decay of $\Order(n^{-1})$ for arbitrary $n$. However, when $n = \lambda b^m$ for a fixed $\lambda$, the decay is similar to that for the preferred values $n = b^m$. In the next subsection, we study this decay numerically for specific parameter choices to evaluate how well the theoretical bounds hold in practice.

\subsection{Numerical Experiments}

In this subsection, we present numerical experiments illustrating the decay of the upper bound on $P_\alpha$ derived in previous subsections, with $\alpha = 2$. All lattice samples 
were generated using QMCPy \citep{QMCPy2026a} with the default generating vector $\vh$.

In \Cref{fig:p_alpha_O} we plot the upper bound on $P_2(\vh,[0.1,0.05],n,1\!\!:\!\!2)$ for all positive integers $n \leq 2^{10}$. The decay is nearly $\Order(n^{-2})$ when $n$ is a power of $2$, and closer to $\Order(n^{-1})$ for arbitrary $n$, as expected. For $n =\lambda 2^p$ with $\lambda = 1, 3, 5, 7$, \Cref{fig:/p_alpha_lambda_O} shows that the upper bound %on $P_2(\vh,[0.1, 0.05],\lambda 2^p,1:2)$  
decays nearly like $\Order(n^{-2})$ in all cases, as proven in \Cref{thm:one}, with generally smaller values for smaller $\lambda$.

\begin{figure}[H]
    \centering
    \includegraphics[width=0.6\linewidth]{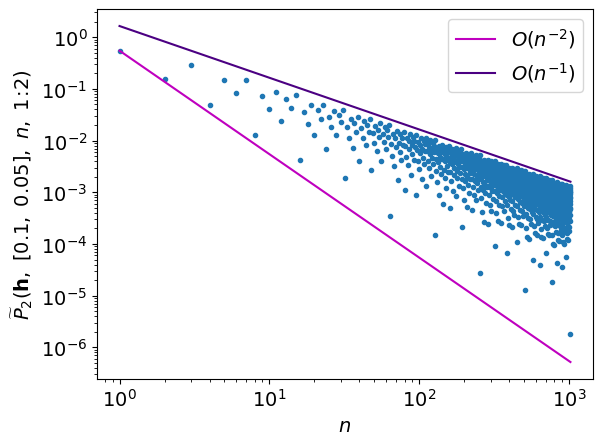}
    \caption{The decay of the upper bound on $P_2(\vh,[0.1, 0.05],n,1:2)$ for $ 1 \leq n \leq 2^{10}$.  %for the default $\vh$ in QMCPy. For values of $n$ that are powers of $2$, the decay is nearly $\Order(n^{-2})$, while for arbitrary $n$ it is closer to  $\Order(n^{-1})$. 
    }
    \label{fig:p_alpha_O}
\end{figure}

\begin{figure}[H]
    \centering
    \includegraphics[width=0.6\linewidth]{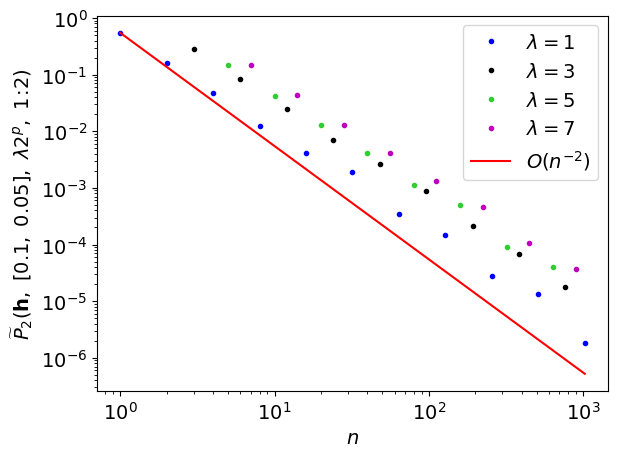}
    \caption{The decay of the upper bound on $P_2(\vh,[0.1, 0.05],\lambda 2^p,1:2)$ for $ 1 \leq \lambda 2^p \leq 2^{10}$. 
    %for the default $\vh$ in QMCPy. Note the $\Order(n^{-2})$ decay in all cases.  The values are generally smaller for smaller $\lambda$.
    }
    \label{fig:/p_alpha_lambda_O}
\end{figure}

% We now explore how lattices with arbitrary $n$ perform in a numerical integration example. 
% to complement 
To further explore the performance of lattices for  arbitrary $n$, we consider a numerical integration example using the $s$-dimensional Keister integral \citep{Kei96},
\begin{equation*}
    \mu = \int_{\reals^{s}} \cos(\|\vx\|_2) \exp{(-\|\vx\|_2^2}) \, \dif \vx = \mathbb{E}[f(\vX)]
\end{equation*}
where $f(\vx) = \pi^{s/2}\cos(\|\vx\|_2)$ for all $\vx \in \reals^s$ and $\vX \sim \mathcal{N}(\vzero_s, \mathbf{I}_{s\times s}/2)$. Then
\[
\mu = \int_{[0,1]^{s}} \cos\left(\sqrt{\frac{1}{2}\sum_{j=1}^{s}\Phi^{-1}(z_j)^2} \ \right) \, \dif \vz,
\]
where $\Phi$ is the standard normal cumulative distribution function, and the last equality follows substituting $x_j = \Phi^{-1}(z_j)/\sqrt{2}$ for each $j = 1, \ldots, s$. %inverve transform sampling

  We consider $s = 6$ and approximate $\mu$ using the sample mean over the first  %$n = 1, 2, \ldots, 2^{15}$ 
 $n$ points of a shifted lattice. In \Cref{fig:keister_n} we plot absolute errors for all positive integers $n \leq 2^{15}$, while \Cref{fig:keister_lambda} focuses  on $n = \lambda 2^p$ with the same values of $\lambda$ as in \Cref{fig:/p_alpha_lambda_O}.

\begin{figure}[H]
    \centering
    \includegraphics[width=0.6\linewidth]{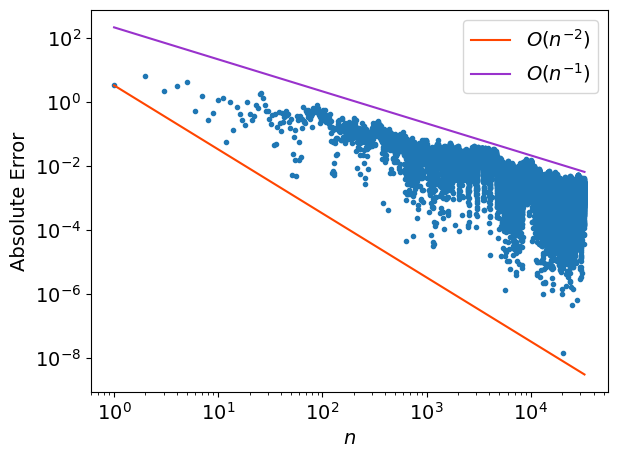}
    \caption{The decay of the absolute error in approximating six-dimensional Keister integrand by a sample mean over the first $1 \le n \le 2^{15}$ points. }
    \label{fig:keister_n}
\end{figure}

\begin{figure}[H]
    \centering
    \includegraphics[width=0.6\linewidth]{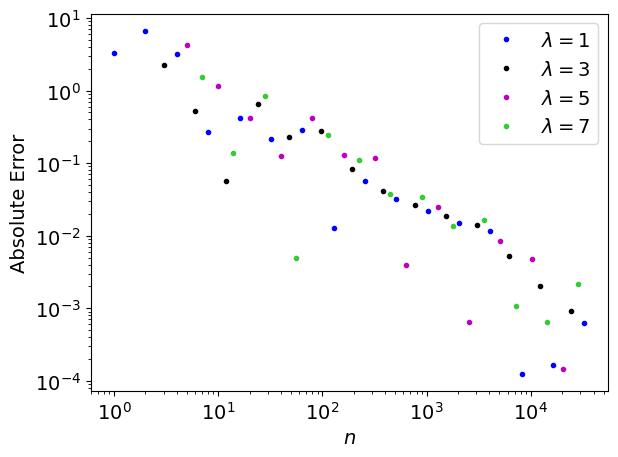}
    \caption{The decay of the absolute error in approximating 6-dimensional Keister integrand by a sample mean over the first $1 \le \lambda 2^p \le 2^{15}$ points. %The errors are generally smaller for smaller $\lambda$.
    }
    \label{fig:keister_lambda}
\end{figure}

So far, we have analyzed the figure of merit $P_\alpha$ for $n$ other than $b^m$ in this Banach space setting. In the next section, we consider a related Hilbert space and introduce unequal sample weights in our cubature formula. We also investigate a quantity related to $P_\alpha$, which allows these unequal weights. 

\section{The Hilbert space setting with $P_{\alpha,2}$ and optimal sample weights}
In \Cref{sec:Palphabd} we only derived \emph{upper bounds} on the values of $P_{\alpha}$ for arbitrary $n$.  The exact formula for $P_{\alpha}$ for arbitrary $n$ involves an infinite sum over the wavenumbers that cannot be reduced to a finite sum.  However, if we switch from the Banach space, $\mathcal{F}_\alpha$, defined in \eqref{eq:Banachspace}, to a Hilbert space, $\mathcal{F}_{\alpha,2}$, then we can evaluate the corresponding figure of merit, $P_{\alpha,2}$ for arbitrary values of $n$ in terms of a finite sum.  
%add something before 

We define a Hilbert space of functions, related to the Banach space defined in \eqref{eq:Banachspace} by taking the weighted $\ell_2$ norm of the Fourier coefficients rather than the $\ell_\infty$ norm: 
\begin{gather*}
    \cf_{\alpha,2} = \{ f \in \cl_2[0,1)^{s} :
\norm[\cf_{\alpha,2}]{f} < \infty \}, \\
\norm[\cf_{\alpha,2}]{f} := \left[\sum_{\vk \in \integers^s} \left(\tr(\vk,\vgamma)^{\alpha}\abs{\tilde{f}(\vk)}\right)^2 \right]^{1/2} .
\end{gather*}

Note that this is a stronger norm, i.e., $\norm[\cf_{\alpha}]{f} \le \norm[\cf_{\alpha,2}]{f}$ for all $f \in \cf_{\alpha,2}$.  As a consequence
\begin{multline} \label{Pa2Parelation}
P_{\alpha,2}(\vh,\vgamma,n,s) : = \sup_{\norm[F_{\alpha,2}]{f} \le 1 } \abs{\int_{[0,1)^{s}} f(\vx) \, \dif \vx - \appxint_n(f)} \\
\le \sup_{\norm[F_{\alpha}]{f} \le 1 } \abs{\int_{[0,1)^{s}} f(\vx) \, \dif \vx - \appxint_n(f)} = P_{\alpha}(\vh,\vgamma,n,s).
\end{multline}

\subsection{Derivation of exact formulae involving arbitrary sample weights}
Before proceeding with the derivation of $P_{\alpha,2}$, we generalize to arbitrary sample weights for cubature, i.e., 
\[
\appxintn(f) =  \sum_{i=0}^{n-1} w_if(\vz_i).
\]
Then the tight bound on the error analogous to \eqref{eq:wcerrPalpha} becomes: 
\begin{align*}
\nonumber
\MoveEqLeft{\sup_{\norm[\cf_{\alpha,2}]{f} \le 1} \abs{\int_{[0,1)^{s}} f(\vx) \, \dif \vx - \appxint_n(f)}^2} \\
& = \sup_{\norm[\cf_{\alpha,2}]{f} \le 1} \sum_{\vk \in \integers^s \setminus\{\vzero\}} \abs{\appxint_n(\me^{2 \pi \sqrt{-1} \vk^T \cdot})}^2\tr(\vk,\vgamma)^{-2\alpha} \\
&=\sum_{\vk \in \integers^s \setminus\{\vzero\}} \sum_{i,j = 0}^{n-1} \me^{2 \pi \sqrt{-1} (\vx_i - \vx_j)} w_i w_j \tr(\vk,\vgamma)^{-2\alpha} \\
&= \underbrace{ -1 + \sum_{i,j = 0}^{n-1} w_i w_j \sum_{\vk \in \integers^s} \me^{2 \pi \sqrt{-1} (\vx_i - \vx_j)} \tr(\vk,\vgamma)^{-2\alpha}}_{P_{\alpha,2}^2(\vh,\vgamma,n,1:s, \vw)},
\end{align*}
where $\vw = \bigl( w_i \bigr)_{i=0}^{n-1}$.

Note that for lattices, i.e., $n=b^m$, and for equal weights $\vw = b^{-m} \vone $, it follows that $P_{\alpha,2}^2(\vh,\vgamma,b^m,s, b^{-m} \vone) = P_{2\alpha}(\vh,\vgamma,b^m,s)$ by definition.

For positive integer $\alpha$, we may again appeal to Bernoulli polynomials as we did in \eqref{eq:P2alpha}:
\begin{align*}
\nonumber
P_{\alpha,2}^2(\vh,\vgamma,n,s, \vw)
& = -1 + \sum_{i,j = 0}^{n-1} w_i w_j  
\prod_{\ell = 1}^s [1 + \tilde{\gamma}_{\ell}^{2\alpha}B_{2\alpha}(x_{i,\ell} - x_{j, \ell}\bmod 1)] \\
& = - 1 + \vw^T \mK \vw, \\
\text{where } \mK  & = \biggl( \prod_{\ell = 1}^s [1 + \tilde{\gamma}_{\ell}^{2 \alpha}B_{2\alpha}(x_{i,\ell} - x_{j, \ell} \bmod 1 )]\biggr)_{i,j=0}^{n-1} ,
\end{align*}
and $\tilde{\gamma}_{\ell}$ was defined in \eqref{eq:gammatilde}.

To find the optimal sample weights $\vw$, we solve the following minimization problem:
\[ \min_{\vw} \vw^T \mK \vw \quad \text{s.t.} \quad \vw^T \vone = 1, \]
where the constraint ensures that constant functions are integrated exactly. This problem has a known solution but we derive it for the sake of completeness. We solve it using the method of Lagrange multipliers. 
First, we define the Lagrangian:  
\[
\mathcal{L}(\vw, \lambda) = \vw^T \mK \vw - \lambda (\vw^T \vone - 1).
\]
We compute the gradient of \(\mathcal{L}\) and set it equal to zero:
\[
\nabla_{\vw} \mathcal{L} = 2 \mK \vw - \lambda \vone = 0 \implies 2\mK \vw = \lambda \vone.
\]
\[
\nabla_{\lambda} \mathcal{L} = -(\vw^T \vone - 1) = 0 \implies \vw^T \vone = 1.
\]
Since $\mK$ is positive-definite, we know it is a minimum. 
From the first equation above we get:
\[
\vw = \frac{\lambda}{2} \mK^{-1} \vone.
\]
Substituting this expression for $\vw$ into the constraint to find the value of $\lambda$:
\[
\left(\frac{\lambda}{2} \mK^{-1} \vone \right)^T \vone = 1
\implies \frac{\lambda}{2} \vone^T \mK^{-1} \textbf{1} = 1
 \implies \lambda = \frac{2}{\vone^T \mK^{-1} \vone}.
 \]
Thus, the optimal sample weights are: 
\[
\vw_{\opt} := \frac{\mK^{-1}\vone}{\vone^T \mK^{-1}\vone }.
\]

The cost of solving for the weights $\vw$ is $\Order(sn^2)$ to evaluate the kernel and $\Order(n^3)$ to invert the Gram matrix, $\mK$.  (We assume that evaluating one element of the Gram matrix is proportional to the dimension.)  However, since $\mK$ is Toeplitz, we can reduce the cost of solving for optimal sample weights to $\Order(sn+n^2)$.   %$\Order(sn) + \Order(n^2)$.

\subsection{Error bounds in the Hilbert space setting}

\begin{theorem} \label{thm:three}
The upper bounds derived in \Cref{thm:one} and \Cref{thm:two} also hold for $P_{\alpha,2}$.
\end{theorem}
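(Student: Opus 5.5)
The plan is to reduce everything to the comparison \eqref{Pa2Parelation} already established above. For equal sample weights $\vw = n^{-1}\vone$, the cubature $\appxint_n$ is the same sample mean that appears in \eqref{eq:wcerrPalpha}. Moreover, $P_{\alpha,2}(\vh,\vgamma,n,s)$ is by definition the worst-case error over the unit ball of $\cf_{\alpha,2}$. That ball is contained in the unit ball of $\cf_\alpha$, because
\[
\tr(\vk,\vgamma)^{\alpha}\abs{\tf(\vk)} \le \Bigl[\sum_{\vk'}\bigl(\tr(\vk',\vgamma)^{\alpha}\abs{\tf(\vk')}\bigr)^2\Bigr]^{1/2}
\]
for every $\vk$. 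Hence $P_{\alpha,2}(\vh,\vgamma,n,s)\le P_\alpha(\vh,\vgamma,n,s)$ for every $n$, every $s$, and every generating vector $\vh$. Consequently, any upper bound on $P_\alpha$ proved in \Cref{thm:one} or \Cref{thm:two} transfers verbatim to $P_{\alpha,2}$, with the same constants and the same dependence on $n$, $\lambda$, and the logarithmic factors.

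I would carry this out in three steps.

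\textbf{Step 1: justify the norm inequality.} First, state and justify the inequality $\norm[\cf_\alpha]{f}\le\norm[\cf_{\alpha,2}]{f}$: a supremum of the nonnegative terms is bounded by their $\ell_2$ norm. This also gives $\cf_{\alpha,2}\subseteq\cf_\alpha$. Because every $f\in\cf_{\alpha,2}$ has an $\ell_2$-summable weighted coefficient sequence, Cauchy--Schwarz against $\tr(\vk,\vgamma)^{-\alpha}$ (which is square summable since $2\alpha>1$) shows that its Fourier series is absolutely summable. Thus the pointwise evaluation used in $\appxint_n$ is legitimate, and the error functional is the same on both spaces.

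\textbf{Step 2: compare the suprema.} Take suprema of the same error functional over the smaller and the larger ball, which yields \eqref{Pa2Parelation}.

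\textbf{Step 3: invoke the earlier bounds.} Apply \eqref{eq:mainresultone}, \eqref{eq:mainresultoneb}, \eqref{eq:mainresulttwo}, and the two bounds of \Cref{thm:two}, each with the same $\vh$ whose existence those results assert. The generating vector is unchanged, since it does not depend on $m$ or on which space is used.

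The only point needing care, and the one I expect to be the main obstacle, is interpretation. \Cref{thm:three} concerns the equal-weight $P_{\alpha,2}$; the exact formula $P^2_{\alpha,2}=-1+\vw^T\mK\vw$ involves the square, while the bounds concern $P_{\alpha,2}$ itself. I would also remark that the bound extends to the optimal weights, since $\vw_{\opt}$ minimizes $\vw^T\mK\vw$ subject to $\vw^T\vone=1$, and $n^{-1}\vone$ is feasible. Therefore
\[
P_{\alpha,2}(\vh,\vgamma,n,s,\vw_{\opt})\le P_{\alpha,2}(\vh,\vgamma,n,s,n^{-1}\vone)\le P_\alpha(\vh,\vgamma,n,s),
\]
and the bounds hold a fortiori for optimally weighted cubature.
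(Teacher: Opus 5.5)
Your proposal is correct and is essentially the paper's proof. The paper simply cites the inequality $P_{\alpha,2}(\vh,\vgamma,n,s)\le P_\alpha(\vh,\vgamma,n,s)$ in \eqref{Pa2Parelation}, which follows from $\norm[\cf_{\alpha}]{f}\le\norm[\cf_{\alpha,2}]{f}$ (the unit balls are nested), so every bound of \Cref{thm:one} and \Cref{thm:two} transfers with the same $\vh$; your Cauchy--Schwarz justification of absolute summability and your remark that the bounds also hold for the optimal weights $\vw_{\opt}$ (because $n^{-1}\vone$ is feasible) are correct additions beyond what the statement requires.
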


\begin{proof}
    This follows immediately by the inequality in \eqref{Pa2Parelation}.
\end{proof}

Corollary 1 in \cite{HicEtal10a} states that if the worst case error bound for a Hilbert space like those we consider here is $Cb^{-m \alpha})$ for $\alpha \ge 1$ for the preferred sample size, $n = 1, b, b^2, \ldots$, then for arbitrary sample size it is $C n^{-1} \log(n)$. In  \Cref{thm:three} our hypothesis is not on the error bound but on the parameter $\alpha$ defining the Hilbert space.  The corresponding result in \eqref{eq:thmtwoa} avoids the extra $\log(n)$ factor for $\alpha > 1$, granted by inflating the leading constant.

\subsection{Numerical results}

To illustrate how optimal weights affect the behavior of $P_{\alpha,2}$ we present several numerical results. Figure \ref{fig:ssdisc-vs-ssdiscopt} shows the decay of $P_{1,2}$ under optimal sample weights and under equal weights for $n \leq 2^{12}$. In both cases, the decay is nearly $\Order(n^{-1})$. However, the values of $P_{1,2}$ are consistently smaller and non-increasing for optimal weights. 
% To illustrate the advantage of using optimally chosen sample weights 
Figures \Cref{fig:ratio-floor} and \ref{fig:ratio-ceil} compare $P_{1,2}$ at each positive integer $n \leq 2^7$ to its value at the nearest power-of-two sample sizes. %that are powers of 2. 
Figure \ref{fig:ratio-floor} uses the largest such sample size no greater than $n$ % that is a power of 2 and no greater than $n$, 
and shows that, with optimal sample weights the ratios remain at or below one as $n$ increases, while the ratios with equal weights fluctuate more and frequently exceed one. Figure \ref{fig:ratio-ceil} uses the smallest power-of-two sample size greater than or equal to $n$. %that is a power of $2$ and no less than $n$, 
 Although in both cases the ratios are greater than one, optimal weighs produce lower ratios than the equal weights.

% Figures \ref{fig:ratio-floor} and \ref{fig:ratio-ceil} compare $P_{1,2}$ at each $n$ to its value at the largest sample size that is a power of 2 and no greater than $n$ and the smallest sample size that is a power of $2$ and no less than $n$, respectively. 

% Next we want to numerically study the quality of $P_{1,2}$ under optimally chosen sample weights and compare it to the the quality of $P_{1,2}$ under equal  sample weights.

\begin{figure}[H]
    \centering
    \includegraphics[width=0.6\linewidth]{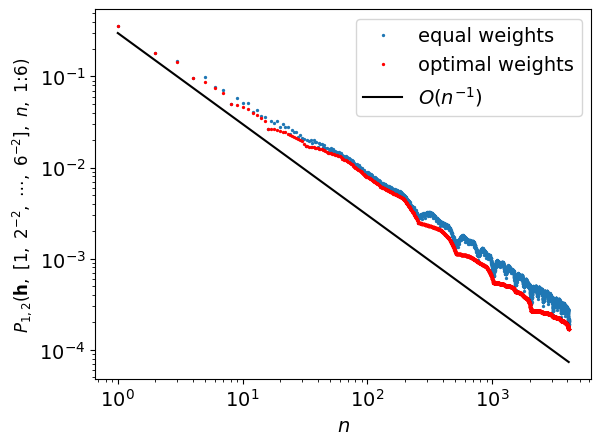}
    \caption{The decay of $P_{1,2}(\vh, [1, 2^{-2}, \cdots, 6^{-2}], n, 1:6)$ with equal sample weights vs unequal sample weights for $ 1 \leq n \leq 2^{12}$ for the default $\vh$ in QMCPy. The quantity $P_{1,2}$ decays nearly like $\Order(n^{-1})$, and for optimal weights it is non-increasing.}
    \label{fig:ssdisc-vs-ssdiscopt}
\end{figure}

\begin{figure}[H]
    \centering
    \includegraphics[width=0.6\linewidth]{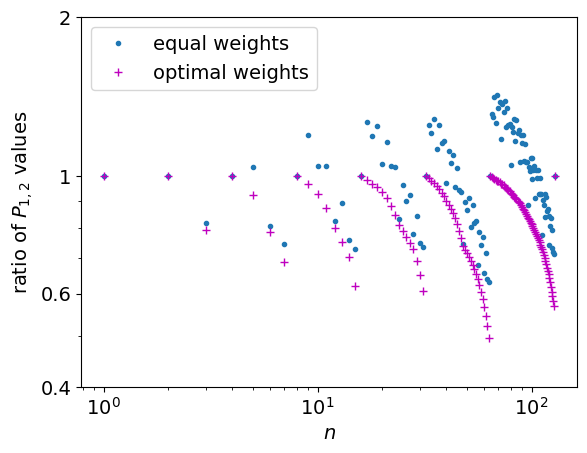}
    \caption{The ratio $P_{1,2}(\vh,[1, 1/4],n,1:2)/P_{1,2} (\vh,[1, 1/4],2^{\lfloor \log_2(n) \rfloor},1:2)$ for $ 1 \leq n \leq 2^7$. This shows how much better (smaller than one) or worse (greater than one) the figure of merit is compared to the largest sample size that is a power of $2$ and no greater than $n$. }
    \label{fig:ratio-floor}
\end{figure}

\begin{figure}[H]
    \centering
    \includegraphics[width=0.6\linewidth]{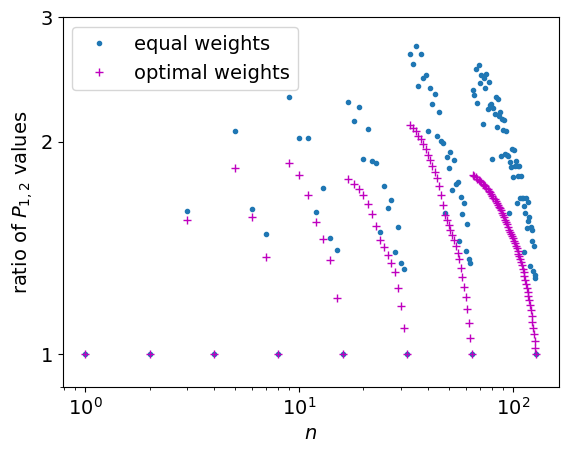}
    \caption{The ratio $P_{1,2}(\vh,[1, 1/4],n,1:2)/P_{1,2} (\vh,[1, 1/4],2^{\lceil \log_2(n) \rceil},1:2)$ for $ 1 \leq n \leq 2^7$. This shows how much worse (greater than one) the figure of merit is compared to the smallest sample size that is a power of $2$ and no less than $n$.}
    \label{fig:ratio-ceil}
\end{figure}
% These numerical results illustrate the advantage of using optimally chosen sample weights as opposed to equal weights. In Figure \ref{fig:ssdisc-vs-ssdiscopt} we note that $P_{1,2}$ decays nearly like $\Order(n^{-1})$ in both cases, but the values are consistently smaller and non-increasing for optimal weights. Figure \ref{fig:ratio-floor} compares 
% $P_{1,2}$ at each $n$ to its value at the largest power of 2 and no greater than $n$. For optimal sample weights, the ratios do not exceed one as $n$ increases, while the ratios for equal weights demonstrate more fluctuations and are often greater than one. Figure \ref{fig:ratio-ceil} compares $P_{1,2}$ at each $n$ to its value at the smallest power of 2 and no less  than $n$. Although in both cases the ratios are generally above one, optimal weighs produce lower values than the equal weighs. 
% 

\section{Discussion and Future Work} %explain more what it all means 

The results in this paper illustrate that the figure of merit $P_\alpha$ does have a worse upper bound for general $n$ compared to the preferred $n = b^m$ for a fixed prime $b$, as expected. However, this is not a handicap if the smoothness of the integrand is limited.  Moreover, we can still attain the optimal decay of $\Order(n^{-\alpha})$ for $n =\lambda b^p$ for a fixed integer $\lambda$ and $p \to \infty$. This suggests that even if we cannot obtain a full lattice due to early termination or computational time budget, sample sizes of the form $n = \lambda b^p$ could still be useful in practice.

To investigate whether unequal sample weights can improve the cubature error for arbitrary $n$, we analyzed a related quantity $P_{\alpha,2}$. Numerical experiments with $\alpha=1$ 
demonstrate that by using optimal sample weights, even when the next full lattice is not attainable, we obtain a smaller cubature error for arbitrary $n$ compared to the full lattice with sample size $b^m < n$. 

  %Comparing  $P_{1,2}$ at each $n$ to $P_{1,2}$ for the previous full lattice 
 %This suggests that we can obtain (pretty small) cubature error if we cannot have the next full lattice, provided that we use optimal weights. 

Future work could explore the performance of lattices with arbitrary $n$ for concrete numerical test cases, such as option pricing or integration problems described in \cite[Integration]{simulationlib}. Additionally, further numerical experiments could investigate the decay of the upper bound on $P_\alpha$ and the decay of $P_{\alpha,2}$ for a wider range of $\alpha$ values.

% \left\{\right\}
\bibliographystyle{elsarticle-harv.bst}
\bibliography{FJH26,FJHown26,LarysaReferences}
\end{document}